\documentclass[12pt,leqno,letterpaper]{article}

\usepackage{amsmath,amsthm,enumerate,amssymb}
\usepackage[latin1]{inputenc}
\usepackage[T1]{fontenc}
\usepackage[english]{babel}
\usepackage{graphicx}

\newtheorem{theorem}{Theorem}[section]
\newtheorem{proposition}[theorem]{Proposition}
\newtheorem{lemma}[theorem]{Lemma}

\newtheorem{definition}[theorem]{Definition}

\newcommand{\tr}{{\rm Tr\hskip -0.2em}~}

\begin{document}

\title{Metric adjusted skew information: Convexity and restricted forms of superadditivity}
\author{Liang Cai and Frank Hansen}
\date{July 30, 2009\\
{\tiny Revised August 24, 2009}}

\maketitle

\begin{abstract}

We give a truly elementary proof of the convexity of metric adjusted skew
information following an idea of Effros. We extend earlier results of weak forms of superadditivity to general metric adjusted skew informations. Recently, Luo and Zhang introduced the notion of semi-quantum states on a bipartite system and proved superadditivity of the Wigner-Yanase-Dyson skew informations for such states. We extend this result to general metric adjusted skew informations. We finally show that a
recently introduced extension to parameter values $ 1<p\le 2 $ of the WYD-information is a special case of (unbounded) metric adjusted skew information.
\end{abstract}

\section{Introduction}

The theory of measures of quantum information was initiated by Wigner and Yanase \cite{kn:wigner:1963} who wanted to find a good measure of our knowledge of a difficult-to-measure observable (state)
with respect to a conserved quantity. They were motivated by earlier observations \cite{kn:wigner:1952, kn:araki:1960} indicating that the obtainable accuracy of the measurement of a physical observable, in the presence of a conservation law, is limited if the operators representing the observable and the conserved quantity do not commute. Wigner and Yanase discussed a number of postulates that such a measure should satisfy and proposed, tentatively, the so-called {\it skew information} defined by
\[
I_{\rho}(A)=-\frac{1}{2}{\rm Tr}([\rho^{\frac{1}{2}},A]^2),
\]
where $\rho$ is a state (density matrix) and $A$ is an observable (self-adjoint matrix). The most important
requirement to a measure of quantum information is that knowledge decreases under the mixing of states; or equivalently that the measure is convex in the state variable. They proved this property for the skew information, but Dyson suggested that the measures defined by
\begin{equation}\label{WYD information}
I_{\rho}(p,A)=-\frac{1}{2}{\rm Tr}([\rho^p,A] \cdot [\rho^{1-p},A]),
\end{equation}
where $ p $ is a parameter $ (0<p<1) $ may have the same property.
This became the celebrated Wigner-Yanase-Dyson conjecture later
proved by Lieb \cite{Lieb73}. In addition to the convexity
requirement Wigner and Yanase suggested that the measure should be
additive with respect to the aggregation of isolated subsystems and,
for an isolated system, independent of time. These requirements are
easily seen to be satisfied for both the Wigner-Yanase skew
information and for the Dyson generalization in (\ref{WYD
information}).

Finally, Wigner and Yanase conjectured that the skew information should satisfy a certain superadditivity condition coming from thermodynamics, where it is satisfied for both classical and quantum mechanical systems. It reflects the loss of information about statistical correlations between two
subsystems when they are only considered separately. Wigner and Yanase
conjectured that the skew information also possesses this property and
proved it when the state of the aggregated system is pure. The conjecture was
generally believed to be true \cite{kn:luo:2007} until the second author gave a counter example in~\cite{kn:hansen:2007:2}.

The first author, Li and Luo \cite{kn:cai:2008} discussed weak forms
of superadditivity for the Wigner-Yanase information, and Luo and
Zhang \cite{Luo072,LuoZhang08} introduced the notion of semi-quantum
states on a bipartite system and proved superadditivity for the
Wigner-Yanase-Dyson skew informations for such states. We continue
these inquiries and extend the obtained results to general metric
adjusted skew informations.

\subsection{Monotone metrics}

 \begin{definition} We denote by $ {\cal F}_{op} $ the set of functions $ f\colon\mathbb R_{+}\to\mathbb R_{+} $ such that
    \begin{enumerate}[(i)]
    \item $ f $ is operator monotone,
    \item $ f(t)=tf(t^{-1}) $ for all $ t>0, $
    \item $ f(1)=1. $
    \end{enumerate}
    \end{definition}

 Since $ f $ is increasing, it may be extended to a continuous function defined on $ [0,\infty) $ with $ f(0)\ge 0. $ We say that $ f $ is regular if $ f(0)>0 $ and non-regular if $ f(0)=0. $
 A Morozova-Censov function $ c $ is a function  of two positive variables
 given on the form
\[
c(x,y)=\frac{1}{y f(xy^{-1})}\qquad x,y>0
 \]
where $ f\in {\cal F}_{op}. $ It defines a Rimannian metric $ K_\rho^c $ on the tangent space of the state manifold (the space of all positive definite density matrices supported by the underlying Hilbert space), and it is given by
\[
K_\rho^c(A,B)=\tr A^* c(L_\rho,R_\rho) B
\]
where $ L_\rho $ and $ R_\rho $ are the positive definite commuting left and right multiplication operators by $ \rho. $
The metric is customarily extended to all linear operators $ A $ and $ B $ in the underlying Hilbert space. It is decreasing in the sense that
\[
K^c_{T(\rho)}(T(A),T(A))\leq K^c_\rho(A, A)
\]
for any self-adjoint A and completely positive trace preserving map T, cf. \cite{kn:petz:1996:2,kn:ruskai:1999}.

The combined efforts of Cencov and Petz established that any
decreasing Rimannian metric on the state manifold is given in this
way \cite{kn:censov:1982,kn:morozova:1990,kn:petz:1996:2}. We say
that the metric is regular if the generating operator monotone
function is regular. Recently, a correspondence
between the regular and the non-regular operator monotone functions in $ {\cal F}_{op} $ has been
constructed by Gibilisco, the second author and Isola \cite{kn:gibilisco:2008}.

\subsection{Metric adjusted skew information}

Let $ c $ be the Morozova-Cencov function of a regular decreasing metric. The corresponding {\it metric adjusted skew information} is defined by
\begin{equation}\label{metric adjusted skew information}
I^c_\rho(A) = \frac{m(c)}{2} K^c_\rho(i[\rho, A], i[\rho, A]),
\end{equation}
where $ m(c)=f(0) $ is the metric constant.
It may be extended \cite[Theorem 3.8]{kn:hansen:2008:1} to positive semi-definite states $ \rho $ and
arbitrary operators $ A. $ The following theorem is taken from
\cite{kn:hansen:2008:1}.

    \begin{theorem}\label{th:hansen:1}
    Let $ c $ be a regular Morozova-Chentsov function, and consider the finite-dimensional Hilbert space $ \mathbf C^n $ for some natural number $ n. $
    \begin{enumerate}
    \item The metric adjusted skew information is a convex function,
    $ \rho\to I^c_\rho(A), $ on the manifold of states for any $ n\times n $ matrix $ A. $

    \item For $ \rho=\rho_1\otimes\rho_2 $ and $ A=A_1\otimes 1 + 1\otimes A_2 $ we have
    \[
    I^c_\rho(A)=I^c_{\rho_1}(A_1)+I^c_{\rho_2}(A_2).
    \]

    \item If $ A  $ commutes with an Hamiltonian operator $ H $ then
    \[
    I^c_{\rho_t}(A)=I^c_\rho(A)\qquad t\ge 0,
    \]
    where $ \rho_t=e^{itH}\rho e^{-itH}. $

    \item For any pure state $ \rho $ (one-dimensional projection) we have
    \[
    I^c_\rho(A)={\operatorname{Var}}_\rho(A)
    \]
    for any $ n\times n $ matrix $ A. $

    \item For any density matrix $ \rho $ and $ n\times n $ matrix $ A $ we have
    \[
    0\le I^c_\rho(A)\le{\operatorname{Var}}_\rho(A).
    \]
  \end{enumerate}
    \end{theorem}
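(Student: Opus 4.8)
The plan is to reduce the whole theorem to the operator representation of $I^c_\rho$. Viewing $n\times n$ matrices as vectors in the Hilbert-Schmidt space on which $L_\rho$ and $R_\rho$ are commuting positive self-adjoint multiplications, we have $i[\rho,A]=i(L_\rho-R_\rho)A$, and since $L_\rho-R_\rho$ is self-adjoint on that space a short manipulation with the Hilbert-Schmidt inner product gives
\begin{equation*}
I^c_\rho(A)=\frac{m(c)}{2}\,\tr A^{*}(L_\rho-R_\rho)^{2}c(L_\rho,R_\rho)\,A=\tr A^{*}\phi_c(L_\rho,R_\rho)\,A ,
\end{equation*}
where $\phi_c(x,y)=\frac{m(c)}{2}(x-y)^{2}c(x,y)=\frac{m(c)(x-y)^{2}}{2\,yf(xy^{-1})}$. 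The decisive observation is that $\phi_c(x,y)=y\,g(xy^{-1})$ is the noncommutative perspective $P_g(L_\rho,R_\rho)=R_\rho^{1/2}g(R_\rho^{-1/2}L_\rho R_\rho^{-1/2})R_\rho^{1/2}$ of the one-variable function $g(t)=\dfrac{m(c)(t-1)^{2}}{2f(t)}$ on $(0,\infty)$.

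For part 1 I would first verify that $g$ is operator convex. Writing $g(t)=\tfrac{1+t}{2}-\tilde f(t)$ with $\tilde f(t)=\tfrac{1+t}{2}-\dfrac{m(c)(1-t)^{2}}{2f(t)}$, one recognizes $\tilde f$ as the non-regular operator monotone function in ${\cal F}_{op}$ attached to the regular $f$ by the correspondence of \cite{kn:gibilisco:2008} (indeed $\tilde f(0)=\tfrac12-\tfrac{m(c)}{2f(0)}=0$ since $f(0)=m(c)$); as a function operator monotone on $(0,\infty)$ is operator concave, $g$ is an operator affine function minus an operator concave one, hence operator convex. Then, following Effros, the perspective $P_g$ of an operator convex $g$ is jointly operator convex: for positive definite $B_1,B_2$, put $B=\tfrac12(B_1+B_2)$ and $C_i=B^{-1/2}B_i^{1/2}$, so that $\tfrac12\sum C_iC_i^{*}=1$; applying the Hansen-Pedersen-Jensen operator inequality to $X_i=B_i^{-1/2}A_iB_i^{-1/2}$ and conjugating by $B^{1/2}$ gives $P_g\bigl(\tfrac12(A_1+A_2),B\bigr)\le\tfrac12\sum P_g(A_i,B_i)$. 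Composing this joint convexity with the affine map $\rho\mapsto(L_\rho,R_\rho)$ (whose range consists of commuting pairs) and the positive linear functional $\Phi\mapsto\tr A^{*}\Phi A$ yields convexity of $\rho\mapsto I^c_\rho(A)$ on the manifold of (positive definite) states. I expect the operator-convexity of $g$ to be the only nonformal point; everything else in this part is bookkeeping.

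Parts 2 and 3 then follow directly from the representation. For part 3, $[A,H]=0$ forces $A$ to commute with $U=e^{itH}$, so $[\rho_t,A]=U[\rho,A]U^{*}$ while $L_{\rho_t}=\operatorname{Ad}_U L_\rho\operatorname{Ad}_{U^{*}}$ and similarly for $R_{\rho_t}$; since $\operatorname{Ad}_U$ is unitary on the Hilbert-Schmidt space, $K^c_{\rho_t}(i[\rho_t,A],i[\rho_t,A])=K^c_\rho(i[\rho,A],i[\rho,A])$. For part 2, write $i[\rho,A]=i[\rho_1,A_1]\otimes\rho_2+\rho_1\otimes i[\rho_2,A_2]$ and use $L_{\rho_1\otimes\rho_2}=L_{\rho_1}\otimes L_{\rho_2}$, $R_{\rho_1\otimes\rho_2}=R_{\rho_1}\otimes R_{\rho_2}$ together with the homogeneity $c(sx,sy)=s^{-1}c(x,y)$: in the joint eigenbasis the two summands overlap only where $\rho_1$ (resp.\ $\rho_2$) acts diagonally, i.e.\ where $[\rho_1,A_1]$ (resp.\ $[\rho_2,A_2]$) vanishes, so the cross terms are zero, while each diagonal term reproduces $I^c_{\rho_j}(A_j)$ after the spare factor $\tr\rho_{3-j}=1$ is extracted by homogeneity.

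Finally, parts 4 and 5 rest on the boundary values of $c$: the extension formula gives $c(1,0)=c(0,1)=1/f(0)=1/m(c)$, so $\phi_c$ vanishes on the pairs $(1,1)$ and $(0,0)$ and equals $\tfrac12$ on $(1,0)$ and $(0,1)$. Hence for a rank-one projection $\rho=|\psi\rangle\langle\psi|$ only the Hilbert-Schmidt basis elements linking $\psi$ to its orthogonal complement contribute, and summation gives $I^c_\rho(A)=\|A\psi\|^{2}-|\langle\psi|A|\psi\rangle|^{2}=\operatorname{Var}_\rho(A)$ for self-adjoint $A$, which is part 4. For part 5, $I^c_\rho(A)\ge0$ because $K^c$ is a Riemannian metric and $m(c)=f(0)\ge0$; for the upper bound, decompose $\rho=\sum_i\lambda_i|\psi_i\rangle\langle\psi_i|$ into pure states and combine convexity (part 1), the identity of part 4, and concavity of $\rho\mapsto\operatorname{Var}_\rho(A)$:
\begin{equation*}
I^c_\rho(A)\le\sum_i\lambda_i\,I^c_{|\psi_i\rangle\langle\psi_i|}(A)=\sum_i\lambda_i\,\operatorname{Var}_{|\psi_i\rangle\langle\psi_i|}(A)\le\operatorname{Var}_\rho(A).
\end{equation*}
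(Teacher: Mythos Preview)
Your proposal is correct, and for part~1 it follows the same Effros-perspective reduction as the paper: write $I^c_\rho(A)=\tr A^*\,\hat c(L_\rho,R_\rho)A$ with $\hat c(x,y)=\tfrac{m(c)}{2}(x-y)^2c(x,y)$, observe that $\hat c$ is the perspective of the one-variable function $g(t)=\tfrac{m(c)}{2}(t-1)^2/f(t)$, and conclude joint operator convexity of $\hat c$ from operator convexity of $g$. The difference is in how operator convexity of $g$ (equivalently of $h(t)=(t-1)^2/f(t)$) is obtained. The paper's Theorem~\ref{theorem: operator convex functions} splits $h=t^2/f(t)-2t/f(t)+1/f(t)$ and shows each summand is operator convex using only the elementary facts from \cite{HansenPedersen82}; this makes the argument self-contained and applies to \emph{any} positive operator monotone $f$, not only regular $f\in\mathcal F_{op}$. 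Your route---writing $g(t)=\tfrac{1+t}{2}-\tilde f(t)$ and recognizing $\tilde f$ as the non-regular partner of $f$ under the correspondence of \cite{kn:gibilisco:2008}---is shorter and quite elegant, but it imports that correspondence as a black box and is tied to the regular case. Both are valid; the paper's point is precisely to avoid such external input.

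For parts~2--5 the present paper gives no proof at all (the theorem is quoted from \cite{kn:hansen:2008:1}); your sketches are along the expected lines and are fine. Two small remarks: in part~4 the theorem is stated for arbitrary $n\times n$ matrices $A$, whereas your computation $I^c_\rho(A)=\|A\psi\|^2-|\langle\psi|A|\psi\rangle|^2$ is carried out for self-adjoint $A$ (for general $A$ one needs the symmetrized variance, or to separately sum the $(1,k)$ and $(k,1)$ contributions); and in part~5 the convexity you invoke was established on the manifold of invertible states, so one should note that $I^c_\rho$ extends continuously to the boundary (as the paper mentions, citing \cite[Theorem~3.8]{kn:hansen:2008:1}) before decomposing $\rho$ into pure states.
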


     The first item shows that the metric adjusted
    skew information is decreasing under the mixing of states. The second shows that it is additive with respect to the aggregation of
    isolated subsystems, and the third that it, for an isolated system, is independent of time. These requirements were considered essential
    by Wigner and Yanase \cite{kn:wigner:1963} to an effective measure of quantum information.

    Notice that the metric adjusted skew information gives no information about observables commuting with the state variable.
    It would thus appear that the passage from a regular decreasing metric to the metric adjusted skew information is irreversible.

    Luo \cite{kn:luo:2003,kn:luo:2003:2} was the first to notice that the Wigner-Yanase information is bounded by the variance. Recent developments can be found in \cite{kn:andai:2008,kn:audenaert:2008,kn:gibilisco:2007:2}.

Hasegawa and Petz \cite[Theorem 2]{kn:hasegawa:1996} proved that the functions $ f_p $ defined by setting
\begin{equation}\label{family of operator monotone functions}
f_p(t)=\left\{
\begin{array}{ll}
\displaystyle p(1-p)\frac{(t-1)^2}{(t^p-1)(t^{1-p}-1)}\qquad &t>0,\, t\ne 1\\[1ex]
1 &t=1
\end{array}\right.
\end{equation}
are operator monotone for $ 0<p<1; $ they are in fact regular functions in $ \mathcal{F}_{op}$
with $ \lim_{t\to 0}f_p(t)=p(1-p). $ The corresponding Morozova-Chentsov functions are given by
\[
c(x,y)=\frac{1}{p(1-p)}\cdot\frac{(x^p-y^p)(x^{1-p}-y^{1-p})}{(x-y)^2}\qquad x,y>0,
\]
and the associated metric adjusted skew information is therefore the Wigner-Yanase-Dyson skew information
\begin{equation}\label{WYD skew information}
I_\rho^c(A)=-\frac{1}{2}\tr [\rho^p,A]\cdot [\rho^{1-p},A].
\end{equation}
The convexity in the state variable of the metric adjusted skew information is thus a generalization of the convexity of the Wigner-Yanase-Dyson information proved by Lieb \cite{Lieb73}.

Jen\v{c}ova and Ruskai considered in the paper \cite{kn:jencova:2009} the extension of the Wigner-Yanase-Dyson information to parameter values $ 1< p\le 2, $ and they proved that also the extension is convex in the state variable. We show that this extension is associated with monotone metrics, and if we relax the definition of a metric adjusted skew information to allow for non-regular metrics,  Jen\v{c}ova and Ruskai's extension can be understood as an (unbounded) metric adjusted skew information. In particular, the extension automatically enjoys all the well-known properties of a metric adjusted skew information except boundedness.

\section{Convexity of skew information}

Convexity of the metric adjusted skew information in the state
variable was proved in \cite{kn:hansen:2008:1}, but the given proof
relies on Löwner's deep theory of the relationship between operator
monotone functions and the theory of analytic functions. Recently,
Effros \cite[Theorem 2.2]{Effros} put forward a simple idea\footnote{There is some similarity to the approach used in \cite{kn:petz:1986}.} that
made it possible to prove all the major quantum information
inequalities by reduction to simple block matrix manipulations. The
same scheme may also be used to prove convexity in the state
variable of the metric adjusted skew information.

\begin{theorem}[Effros]\label{effros transform}
Suppose that $h$ is operator convex. When restricted to positive commuting matrices the function $ g $ defined by
\begin{equation}
g(L, R)= h \left(\frac{L}{R}\right) R
\end{equation}
is jointly convex, that is if the commutator $ [L,R]=0 $ and
 \[
 L = \lambda L_1+(1-\lambda)L_2
 \quad\text{and}\quad
R=\lambda R_1+(1-\lambda)R_2
\]
where also the commutators $ [L_1, R_1]=0 $ and $ [L_2, R_2]=0, $ then
\[
g(L,R)\leq \lambda g(L_1, R_1)+(1-\lambda)g(L_2, R_2)
\]
for $0\leq \lambda\leq 1.$
\end{theorem}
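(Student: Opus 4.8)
The plan is to deduce the asserted joint convexity from a \emph{superadditivity} statement for $g$, using that $g$ is positively homogeneous of degree one: $g(tL,tR)=h\big((tL)(tR)^{-1}\big)\,tR=t\,g(L,R)$ for $t>0$. Putting $A=\lambda L_1$, $B=\lambda R_1$, $C=(1-\lambda)L_2$, $D=(1-\lambda)R_2$ and assuming first $0<\lambda<1$, the commutators $[A,B]=\lambda^2[L_1,R_1]$, $[C,D]=(1-\lambda)^2[L_2,R_2]$ and $[A+C,B+D]=[L,R]$ all vanish, so it is enough to establish
\begin{equation}
g(A+C,\,B+D)\ \le\ g(A,B)+g(C,D);\tag{$\ast$}
\end{equation}
the right-hand side then equals $\lambda g(L_1,R_1)+(1-\lambda)g(L_2,R_2)$ by homogeneity, and the cases $\lambda\in\{0,1\}$ are immediate.

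To prove $(\ast)$ I would work on $\mathbf C^n\oplus\mathbf C^n$ with the column operator $W=\binom{B^{1/2}}{D^{1/2}}(B+D)^{-1/2}\colon\mathbf C^n\to\mathbf C^n\oplus\mathbf C^n$, which is an isometry since $W^*W=(B+D)^{-1/2}(B+D)(B+D)^{-1/2}=1$; here $B$, $D$, $B+D$ are positive definite because the $R_i$ are (in the degenerate case one first replaces the states by $\rho+\varepsilon$ and lets $\varepsilon\downarrow 0$). Let $T=\operatorname{diag}(AB^{-1},CD^{-1})$, which is positive definite. Using $[A,B]=[C,D]=0$ one gets $B^{1/2}(AB^{-1})B^{1/2}=A$ and $D^{1/2}(CD^{-1})D^{1/2}=C$, hence $W^*TW=(B+D)^{-1/2}(A+C)(B+D)^{-1/2}$, and since $h(T)=\operatorname{diag}\big(h(AB^{-1}),h(CD^{-1})\big)$, likewise $W^*h(T)W=(B+D)^{-1/2}\big(h(AB^{-1})B+h(CD^{-1})D\big)(B+D)^{-1/2}$. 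As $X\mapsto W^*XW$ is unital and completely positive and $h$ is operator convex, the Jensen operator inequality gives $h(W^*TW)\le W^*h(T)W$. Conjugating this by $(B+D)^{1/2}$ and using $[A+C,B+D]=0$ to rewrite $(B+D)^{1/2}h\big((B+D)^{-1/2}(A+C)(B+D)^{-1/2}\big)(B+D)^{1/2}=h\big((A+C)(B+D)^{-1}\big)(B+D)=g(A+C,B+D)$, while the right-hand side collapses to $h(AB^{-1})B+h(CD^{-1})D=g(A,B)+g(C,D)$, yields exactly $(\ast)$.

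The only genuine computation is verifying the two identities for $W^*TW$ and $W^*h(T)W$; the delicate point — and the precise reason the statement is restricted to commuting matrices — is that every collapse of an expression of the form $R^{1/2}h\big(R^{-1/2}LR^{-1/2}\big)R^{1/2}$ to $h(LR^{-1})R$ requires $[L,R]=0$. Without commutativity one must carry $R^{-1/2}LR^{-1/2}$ throughout, and although the perspective $R^{1/2}h\big(R^{-1/2}LR^{-1/2}\big)R^{1/2}$ is still jointly operator convex, it no longer coincides with $h(L/R)R$, so it is exactly the commuting case that the argument above produces cleanly. Finally, the compression inequality $h(W^*TW)\le W^*h(T)W$ used above is itself elementary in Effros's spirit: dilate $W$ to a unitary $U$ on $\mathbf C^n\oplus\mathbf C^n$, observe that $\operatorname{diag}\big(W^*TW,\ast\big)=\tfrac12\big(U^*TU+JU^*TUJ\big)$ with $J=\operatorname{diag}(1,-1)$, apply operator convexity of $h$ together with $h(U^*TU)=U^*h(T)U$ and $h(JMJ)=Jh(M)J$, and compare the $(1,1)$ blocks.
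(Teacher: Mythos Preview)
The paper does not prove this theorem; it is quoted from Effros \cite{Effros} and used as input. Your argument is correct and is essentially Effros's own: reduce joint convexity to superadditivity via the degree-one homogeneity of $g$, then obtain superadditivity from the operator Jensen inequality $h(W^*TW)\le W^*h(T)W$ for the isometry $W=\begin{pmatrix}B^{1/2}\\ D^{1/2}\end{pmatrix}(B+D)^{-1/2}$ and $T=\operatorname{diag}(AB^{-1},CD^{-1})$, with the commutation hypotheses used exactly where you indicate to collapse $R^{1/2}h(R^{-1/2}LR^{-1/2})R^{1/2}$ to $h(LR^{-1})R$. The closing dilation-and-averaging sketch of the Jensen inequality (extend $W$ to a unitary $U$ and average $U^*TU$ with its $J$-conjugate) is the standard Hansen--Pedersen device and is also correct.
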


By a simple reduction the metric adjusted skew information may be written on the form
\[
\label{representation of metric adjusted skew information}
I^c_\rho(A)=\frac{m(c)}{2}\displaystyle\tr A\, \hat c(L_\rho,R_\rho)A,
\]
where
\begin{equation}\label{c hat}
\hat c(x,y)=(x-y)^2 c(x,y)\qquad x,y>0.
\end{equation}
It is easy to prove that the metric adjusted skew information is convex in the state variable $ \rho $ if $ \hat c $ is operator convex, see for example \cite[Theorem 1.1]{kn:hansen:2006:3}. The key observation is now to realize that
$ \hat c $ is the Effros transform in Theorem~\ref{effros transform} of the function
\[
h(t)=\frac{(t-1)^2}{f(t)}\qquad t>0.
\]
Operator convexity of $ \hat c $ and convexity of the metric adjusted skew information will thus follow if just $ h $ is operator monotone. It turns out that a proof of this assertion can be accomplished by the same simple matrix manipulations that underlies Effros's  proof of the quantum mechanical inequalities.

\begin{theorem}\label{theorem: operator convex functions}
Let $f\colon\mathbf R_+\to \mathbf R_+ $ be an operator monotone function. Then the function
\[
h(t)=\frac{(t-1)^2}{f(t)}\qquad t>0
\]
is operator convex\footnote{Notice that the theorem cannot be inverted. There are
operator convex functions $ h\colon\mathbf R_+\to\mathbf R_+ $ such that the function $ f(t)=(t-1)^2/h(t) $ is not operator monotone.}.
\end{theorem}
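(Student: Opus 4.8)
The plan is to use an integral representation of the operator monotone function $f$ and then apply Effros's transform (Theorem~\ref{effros transform}) termwise. Since $f\colon\mathbf R_+\to\mathbf R_+$ is operator monotone, it admits a canonical representation
\[
f(t)=\alpha+\beta t+\int_0^\infty \frac{t(1+s)}{t+s}\,d\mu(s),
\]
with $\alpha\ge 0$, $\beta\ge 0$ and $\mu$ a positive measure on $(0,\infty)$; equivalently $f(t)/t$ is operator monotone decreasing with a complementary representation. The idea is to rewrite $h(t)=(t-1)^2/f(t)$ as a positive combination (integral) of functions each of which is visibly the Effros transform $g(t,1)=h_0(t)$ of some operator convex $h_0$, hence operator convex.

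The key step is to identify the building block. For fixed $s>0$ consider
\[
h_s(t)=\frac{(t-1)^2}{t+s}.
\]
A direct computation gives $h_s(t)=t+s-2(1+s)+\dfrac{(1+s)^2}{t+s}$, i.e.\ $h_s(t)= (t+s)+(1+s)^2(t+s)^{-1}-2(1+s)$, which is a sum of an affine function and a positive multiple of $(t+s)^{-1}$, and $t\mapsto (t+s)^{-1}$ is operator convex on $\mathbf R_+$; hence each $h_s$ is operator convex. Now I would observe that, up to the affine pieces which integrate to an affine (hence operator convex, in fact operator affine) contribution, $1/f(t)$ is an integral $\int (t+s)^{-1}\,d\nu(s)$ plus a positive constant, coming from the fact that $1/f$ is operator monotone decreasing on $\mathbf R_+$ (because $f$ is operator monotone and positive, $1/f$ is operator monotone decreasing, so $-1/f$ has a Pick-type representation with a positive measure). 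Writing $\dfrac{1}{f(t)}=\gamma+\int_0^\infty \dfrac{d\nu(s)}{t+s}$ with $\gamma\ge 0$ and $\nu\ge 0$, one gets
\[
h(t)=(t-1)^2\Bigl(\gamma+\int_0^\infty\frac{d\nu(s)}{t+s}\Bigr)
 =\gamma\,(t-1)^2+\int_0^\infty h_s(t)\,d\nu(s),
\]
a positive combination of operator convex functions $(t-1)^2$ and $h_s$, hence operator convex. Alternatively, and more in the spirit of the excerpt, one writes $h_s(t)=\left(\dfrac{t-1}{1}\right)^2\dfrac{1}{t+s}$ and notes that after a rescaling this is exactly the Effros transform of the operator convex function $t\mapsto (t-1)^2(t+s)^{-1}$ evaluated with $R=1$; since Effros's theorem guarantees joint convexity of $g(L,R)$ on commuting pairs, and the matrix substitution $t\mapsto L_\rho R_\rho^{-1}$ with $R=R_\rho$ is the mechanism already used, one deduces operator convexity of $h$ directly, avoiding any appeal to integral representations at all if one prefers. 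I expect to present the self-contained block-matrix version, mirroring Effros.

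The main obstacle is justifying the representation $1/f(t)=\gamma+\int (t+s)^{-1}d\nu(s)$ with the correct handling of the endpoints: one must check that $1/f$ is genuinely operator monotone decreasing on all of $(0,\infty)$ (which follows since $f>0$ and operator monotone, so $f$ maps positive operators to positive operators with $A\le B\Rightarrow f(A)\le f(B)\Rightarrow f(B)^{-1}\le f(A)^{-1}$), and that the resulting measure $\nu$ is finite enough that the interchange of $(t-1)^2$ with the integral is legitimate on compacts — this is routine since $(t-1)^2/(t+s)$ is bounded on any compact subinterval of $(0,\infty)$ uniformly in $s$ near the relevant range. If instead one follows the Effros block-matrix route, the only real work is the bookkeeping of the $2\times 2$ (or $2n\times 2n$) positive block matrices and verifying the commutation hypotheses of Theorem~\ref{effros transform} hold after the substitution, which is exactly the "simple matrix manipulations" the authors advertise; I would carry that out explicitly since it is short and keeps the proof elementary.
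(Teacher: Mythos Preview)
Your approach via the integral representation of $1/f$ is correct in outline but takes a genuinely different route from the paper. The paper decomposes $h$ algebraically as $h_1+h_2+h_3$ with $h_1(t)=t^2/f(t)$, $h_2(t)=-2t/f(t)$, $h_3(t)=1/f(t)$, and then shows each summand is operator convex using only the elementary results of Hansen--Pedersen: since $t\mapsto t/f(t)$ is operator monotone one gets that $h_1$ is operator convex (via $h_1(0)=0$ and $h_1(t)/t$ operator monotone) and that $h_2=-2t/f(t)$ is operator convex (since $t/f(t)$ is then also operator concave); and $h_3=1/f$ is operator convex because $f$ is operator concave and inversion is order-reversing and operator convex. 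The degenerate case $f(0)=0$ is handled at the end by approximating with $f_\varepsilon(t)=f(t+\varepsilon)$. No integral representation is invoked anywhere---and that is precisely the point, since the paper's stated aim is to bypass L\"owner's analytic theory entirely. Your argument, by contrast, rests on the Stieltjes-type representation $1/f(t)=\gamma+\int_0^\infty(t+s)^{-1}\,d\nu(s)$, which is itself a consequence of L\"owner's theorem; once that is granted your proof is clean and exhibits $h$ transparently as a superposition of the explicitly operator convex kernels $h_s(t)=(t-1)^2/(t+s)$ together with $\gamma(t-1)^2$. So both arguments succeed, but the paper's three-term decomposition buys a genuinely L\"owner-free proof, while yours buys a more structural picture of $h$. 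One further remark: your closing ``alternative'' invoking Effros directly does not stand on its own, because Effros's theorem produces joint convexity of $g(L,R)=h(L/R)R$ \emph{from} operator convexity of $h$, not the reverse, so it cannot substitute for a proof that $h$ itself is operator convex.
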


\begin{proof} We first assume that the left limit
\[
f(0)=\lim_{t\to 0} f(t)>0
\]
and write $ h $ as a sum $ h=h_1+h_2+h_3 $ of the three functions
\[
h_1(t)=\frac{t^2}{f(t)}\,,\quad h_2(t)=\frac{-2t}{f(t)}\,,\quad h_3(t)=\frac{1}{f(t)}
\]
defined on the positive half-axis. We prove the statement of the theorem by showing that each of these three functions is operator convex, and we shall do so by elementary methods without the usage of Löwner's theorem.

1. The function $ f $ is strictly positive, defined on the positive half-axis, and operator monotone, hence also the function $ t\to t f(t)^{-1} $ is operator monotone \cite[2.6 Corollary]{HansenPedersen82}. The function $ h_1 $ may be extended to a continuous function defined on $ [0,\infty) $ with $ h_1(0)=0, $
and since the function
\[
\frac{h_1(t)}{t}=\frac{t}{f(t)}
\]
is operator monotone, it follows \cite[2.4 Theorem]{HansenPedersen82} that $ h_1 $ is operator convex.

2. It follows from the remarks above that the function $ t\to tf(t)^{-1} $ may be extended to a non-negative operator monotone function defined on $ [0,\infty). $
It is consequently also operator concave \cite[2.5 Theorem]{HansenPedersen82}, hence $ h_2 $ is operator convex.

3. The function $ f $ itself is
operator concave \cite[2.5 Theorem]{HansenPedersen82} and since inversion is operator decreasing and operator convex we obtain
\[
\begin{array}{rl}
h_3(\lambda x +(1-\lambda)y)&\displaystyle=
\frac{1}{f(\lambda x +(1-\lambda)y)}\\[3ex]
&\displaystyle\leq\frac{1}{\lambda f(x)+(1-\lambda)f(y)}\\[3ex]
&\displaystyle\leq \lambda \frac{1}{f(x)}+(1-\lambda)\frac{1}{f(y)}\\[3ex]
&\displaystyle=\lambda h_3(x)+(1-\lambda)h_3(y)
\end{array}
\]
for $0\leq \lambda \leq 1$ and positive semi-definite operators $x $ and $y.$
From these three observations we obtain the desired conclusion.

In the general case we first apply the obtained result to the functions $ f_\varepsilon(t)=f(t+\varepsilon) $ for $ \varepsilon>0. $ By letting $ \varepsilon\to 0, $ we thus obtain $ h $ as a point-wise limit of operator convex functions, hence $ h $ is also operator convex.
\end{proof}

Notice that the above theorem is of a general nature and does not require $ f $ to be a function in $ {\cal F}_{op}\,. $ In conclusion, we have proved convexity of the metric adjusted skew information by completely elementary methods.

\section{Weak forms of superadditivity}

Lieb \cite{Lieb73} noticed that the Wigner-Yanase-Dyson information is superadditive in the special case where the conserved quantity is of the form $ A\otimes 1_2 $ and we may effortlessly extend this result to any metric adjusted skew information by using that the underlying metric is decreasing.

\begin{lemma}\label{lieb lemma}
Let $f$ be a regular function in $ \mathcal{F}_{op}$ and let $ c $ be the associated Morozova-Chentsov function. Let furthermore $\rho$ be a bipartite density operator on a tensor product $ H_1\otimes H_2 $ of two parties.
If $ A $ is an observable of the first party then
\begin{align}\label{coroweaksup}
I_\rho^c(A\otimes {\mathbf 1_2}) \geq I^c_{\rho_1}(A),
\end{align}
where $ \rho_1=\tr_2\, \rho $ is the partial trace of $ \rho $ on $ H_1. $
\end{lemma}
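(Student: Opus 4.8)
The plan is to exploit the defining property that the monotone metric $K^c$ is decreasing under completely positive, trace-preserving maps, applied to the partial trace $\mathrm{Tr}_2\colon B(H_1\otimes H_2)\to B(H_1)$. First I would recall the representation
\[
I^c_\rho(A\otimes\mathbf 1_2)=\frac{m(c)}{2}\,K^c_\rho\bigl(i[\rho,A\otimes\mathbf 1_2],\,i[\rho,A\otimes\mathbf 1_2]\bigr),
\]
so that it suffices to compare the right-hand side with $\tfrac{m(c)}{2}K^c_{\rho_1}(i[\rho_1,A],i[\rho_1,A])$. The key computational point is that the partial trace intertwines the commutator with $A\otimes\mathbf 1_2$ and the commutator with $A$ on the reduced state: since $\mathrm{Tr}_2$ is $B(H_1)$-bilinear (it satisfies $\mathrm{Tr}_2((X\otimes\mathbf 1_2)\,Y)=X\,\mathrm{Tr}_2(Y)$ and likewise on the right), one gets
\[
\mathrm{Tr}_2\bigl(i[\rho,A\otimes\mathbf 1_2]\bigr)=i\bigl[\mathrm{Tr}_2\rho,\,A\bigr]=i[\rho_1,A].
\]

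With this identity in hand, the monotonicity inequality $K^c_{T(\rho)}(T(B),T(B))\le K^c_\rho(B,B)$, with $T=\mathrm{Tr}_2$ and $B=i[\rho,A\otimes\mathbf 1_2]$, gives directly
\[
K^c_{\rho_1}\bigl(i[\rho_1,A],i[\rho_1,A]\bigr)=K^c_{\mathrm{Tr}_2\rho}\bigl(\mathrm{Tr}_2 B,\mathrm{Tr}_2 B\bigr)\le K^c_\rho(B,B)=K^c_\rho\bigl(i[\rho,A\otimes\mathbf 1_2],i[\rho,A\otimes\mathbf 1_2]\bigr).
\]
Multiplying by $m(c)/2$ yields $I^c_{\rho_1}(A)\le I^c_\rho(A\otimes\mathbf 1_2)$, which is exactly~(\ref{coroweaksup}). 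Regularity of $f$ is used only to ensure $m(c)=f(0)>0$ so that the metric adjusted skew information is defined in the usual bounded sense.

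I expect the only genuine subtlety to be bookkeeping about domains: the monotonicity statement for $K^c$ as quoted in the excerpt is phrased for self-adjoint arguments and positive definite states, whereas $B=i[\rho,A\otimes\mathbf 1_2]$ is self-adjoint but $\rho$ and $\rho_1$ may be merely positive semi-definite. This is handled by the extension of the metric adjusted skew information to positive semi-definite states (cited from \cite[Theorem 3.8]{kn:hansen:2008:1}), or alternatively by a perturbation argument $\rho\rightsquigarrow\rho+\varepsilon\mathbf 1$ followed by $\varepsilon\to 0$, using continuity; the partial-trace identity above is unaffected by either device. Everything else is the one-line application of monotonicity, so there is no real obstacle beyond stating the intertwining relation carefully.
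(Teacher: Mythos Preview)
Your proposal is correct and follows essentially the same route as the paper: establish the intertwining relation $\mathrm{Tr}_2\bigl(i[\rho,A\otimes\mathbf 1_2]\bigr)=i[\rho_1,A]$ (the paper verifies this by an explicit basis computation rather than invoking $B(H_1)$-bilinearity, but the content is identical), and then apply the monotonicity of $K^c$ under the completely positive trace-preserving map $\mathrm{Tr}_2$. Your remarks on the domain subtleties are a welcome addition but do not represent a departure from the paper's argument.
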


\begin{proof}
We first notice that $ \tr_2 \rho(A\otimes 1_2)=\rho_1 A. $ Indeed, for any vectors $ \xi,\eta\in H_1 $ and orthonormal basis $ (e_i)_{i\in I} $ in $ H_2 $ we have
\[
\begin{array}{rl}
(\xi\mid\tr_2\rho(A\otimes\mathbf 1_2)\eta)
&\displaystyle=\sum_{i\in I}(\xi\otimes e_i\mid \rho(A\otimes\mathbf 1_2)(\eta\otimes e_i))\\[3ex]
&=\displaystyle\sum_{i\in I}(\xi\otimes e_i\mid \rho(A\eta\otimes e_i))\\[3ex]
&=(\xi\mid \rho_1A\eta).
\end{array}
\]
It follows that $ \tr_2 i[\rho, A\otimes\mathbf 1_2]=i[\rho_1,A] $ for any self-adjoint $ A $ on $ H_1. $ Since the partial trace is completely positive and trace preserving and the metric $ K_\rho^c $ is decreasing we obtain
\[
\begin{array}{rl}
I^c_\rho(A\otimes\mathbf 1_2)&=\displaystyle\frac{m(c)}{2}
K_{\rho}^c(i[\rho,A\otimes {\bf 1_2}],i[\rho,A\otimes\mathbf 1_2])\\[2ex]
&\geq\displaystyle\frac{m(c)}{2} K_{\rho_1}^c(\tr_2 i[\rho,A\otimes\mathbf 1_2], \tr_2
i[\rho,A\otimes\mathbf 1_2])\\[2ex]
&=\displaystyle\frac{m(c)}{2}K_{\rho_1}^c(i[\rho_1,A],i[\rho_1,A])=I^c_{\rho_1}(A)
\end{array}
\]
and the proof is complete.
\end{proof}

In the same spirit, we extend two weak forms of superadditivity for the Wigner-Yanase information \cite{kn:cai:2008} to general metric adjusted skew informations.

\begin{proposition}
Let $f$ be a regular function in $ \mathcal{F}_{op}$ and let $ c $ be the associated Morozova-Chentsov function. Let furthermore $\rho$ be a bipartite density operator on a tensor product $ H_1\otimes H_2 $ of two parties,
then we obtain the following two weak forms of superadditivity for the metric adjusted skew information:

\begin{equation}\label{weaksup}
I^c_\rho(A\otimes {\mathbf 1_2}+{\mathbf 1_1}\otimes B)
\geq\frac{1}{2}\big(I^c_{\rho_1}(A)+I^c_{\rho_2}(B)\big)
\end{equation}
and
\begin{equation}
\begin{array}{l}
I^c_\rho(A\otimes {\mathbf 1_2}+{\mathbf 1_1}\otimes B)+
I^c_\rho(A\otimes {\mathbf 1_2}-{\mathbf 1_1}\otimes B)\\[1.5ex]
\hskip 9.9em\geq 2\big(I^c_{\rho_1}(A)+I^c_{\rho_2}(B)\big),
\end{array}
\end{equation}
where $ A $ is an observable of the first party and $ B $ is an observable of the second party.
\end{proposition}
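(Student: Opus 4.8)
The plan is to reduce both estimates to Lemma~\ref{lieb lemma} by means of two elementary observations: that a single commutator $i[\rho,\mathbf 1_1\otimes B]$ has vanishing partial trace over $H_2$ (and symmetrically $i[\rho,A\otimes\mathbf 1_2]$ has vanishing partial trace over $H_1$), and that for a fixed state $\rho$ the metric adjusted skew information $C\mapsto I^c_\rho(C)$ is a non-negative quadratic form in the observable $C$. The first inequality is not an immediate consequence of Lemma~\ref{lieb lemma}, since $A\otimes\mathbf 1_2+\mathbf 1_1\otimes B$ is not of the tensor-split form to which that lemma applies; this is the point that requires a little care.

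\emph{First inequality.} I would argue exactly as in the proof of Lemma~\ref{lieb lemma}, the new ingredient being the identity $\tr_2\, i[\rho,\mathbf 1_1\otimes B]=0$. This holds because, for every operator $C$ on $H_1$, cyclicity of the trace gives $\tr\big((\mathbf 1_1\otimes B)\rho\,(C\otimes\mathbf 1_2)\big)=\tr\big(\rho\,(C\otimes B)\big)=\tr\big(\rho\,(\mathbf 1_1\otimes B)(C\otimes\mathbf 1_2)\big)$, hence $\tr_2\big((\mathbf 1_1\otimes B)\rho\big)=\tr_2\big(\rho\,(\mathbf 1_1\otimes B)\big)$. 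Combining this with the identity $\tr_2\, i[\rho,A\otimes\mathbf 1_2]=i[\rho_1,A]$ established in the proof of Lemma~\ref{lieb lemma}, and writing $C=A\otimes\mathbf 1_2+\mathbf 1_1\otimes B$, we obtain $\tr_2\, i[\rho,C]=i[\rho_1,A]$. Since $\tr_2$ is completely positive and trace preserving and the metric $K^c$ is decreasing,
\begin{align*}
I^c_\rho(C)&=\frac{m(c)}{2}K^c_\rho\big(i[\rho,C],i[\rho,C]\big)\\
&\ge\frac{m(c)}{2}K^c_{\rho_1}\big(\tr_2\, i[\rho,C],\tr_2\, i[\rho,C]\big)\\
&=\frac{m(c)}{2}K^c_{\rho_1}\big(i[\rho_1,A],i[\rho_1,A]\big)=I^c_{\rho_1}(A).
\end{align*}
By the symmetric argument, with the roles of the two parties interchanged, also $I^c_\rho(C)\ge I^c_{\rho_2}(B)$, and averaging these two bounds yields \eqref{weaksup}.

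\emph{Second inequality.} By \eqref{metric adjusted skew information} the map $C\mapsto I^c_\rho(C)$ is the composition of the real-linear map $C\mapsto i[\rho,C]$ with the quadratic form $D\mapsto\frac{m(c)}{2}K^c_\rho(D,D)$, so it is itself a non-negative quadratic form and therefore satisfies the parallelogram law. Taking $X=A\otimes\mathbf 1_2$ and $Y=\mathbf 1_1\otimes B$ gives
\[
I^c_\rho(A\otimes\mathbf 1_2+\mathbf 1_1\otimes B)+I^c_\rho(A\otimes\mathbf 1_2-\mathbf 1_1\otimes B)=2I^c_\rho(A\otimes\mathbf 1_2)+2I^c_\rho(\mathbf 1_1\otimes B),
\]
and applying Lemma~\ref{lieb lemma} to $I^c_\rho(A\otimes\mathbf 1_2)$, and its version with the parties interchanged to $I^c_\rho(\mathbf 1_1\otimes B)$, bounds the right hand side below by $2\big(I^c_{\rho_1}(A)+I^c_{\rho_2}(B)\big)$, which is the second inequality.

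The two trace computations and the verification that $I^c_\rho$ is a quadratic form are routine; the one genuinely substantive point is the vanishing of $\tr_2\, i[\rho,\mathbf 1_1\otimes B]$, which is exactly what converts the monotonicity of the metric into the two separate lower bounds $I^c_\rho(A\otimes\mathbf 1_2+\mathbf 1_1\otimes B)\ge I^c_{\rho_1}(A)$ and $I^c_\rho(A\otimes\mathbf 1_2+\mathbf 1_1\otimes B)\ge I^c_{\rho_2}(B)$.
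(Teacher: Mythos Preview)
Your proof is correct and follows essentially the same approach as the paper: both establish $\tr_2\, i[\rho,\mathbf 1_1\otimes B]=0$ (you via cyclicity of the trace, the paper via a simple-tensor computation), deduce $I^c_\rho(A\otimes\mathbf 1_2+\mathbf 1_1\otimes B)\ge I^c_{\rho_1}(A)$ and its symmetric counterpart from monotonicity of $K^c$, and then average; for the second inequality both invoke the parallelogram identity for the quadratic form $I^c_\rho(\cdot)$ together with Lemma~\ref{lieb lemma}.
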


\begin{proof}
We begin by noticing that the partial trace of a commutator
\[
\tr_2 [\rho, \mathbf 1_1\otimes B]=0.
\]
Indeed, if $ \rho=\sigma_1\otimes\sigma_2 $ is a simple tensor, then a simple calculation yields
\[
(\xi\mid (\tr_2 [\sigma_1\otimes\sigma_2, \mathbf 1_1\otimes B])\eta]=
(\xi\mid\sigma_1\eta) \tr[\sigma_2,B]=0
\]
for vectors $ \xi,\eta\in H_1, $ and since a bipartite state $ \rho $ is a sum of simple tensors, the statement follows. We thus obtain
\[
\tr_2 [\rho, A\otimes \mathbf 1_2+\mathbf 1_1\otimes B]=
\tr_2 [\rho, A\otimes \mathbf 1_2]=[\rho_1, A],
\]
where $ \rho_1=\tr_2\rho $ is the partial trace of $ \rho $ on the first party.
By using that the metric $ K^c_\rho $ is decreasing and the partial trace is completely positive and trace preserving, we thus obtain
\[
\begin{array}{rl}
I^c_\rho(A\otimes\mathbf 1_2+\mathbf 1_1\otimes B)
&=
\displaystyle\frac{m(c)}{2} K^c_\rho(i[\rho,X^+], i[\rho,X^+])\\[2ex]
&\ge\displaystyle\frac{m(c)}{2} K^c_{\rho_1}(i[\rho_1,A], i[\rho_1,A])\\[2ex]
&=I^c_{\rho_1}(A),
\end{array}
\]
where $ X^+=A\otimes\mathbf 1_2+\mathbf 1_1\otimes B. $
By symmetry, we similarly obtain
\[
I^c_\rho(A\otimes\mathbf 1_2+\mathbf 1_1\otimes B)\ge I^c_{\rho_2}(B),
\]
and the first assertion follows.

The second statement is an easy consequence of the parallelogram identity
\[
I^c_\rho(X^+)+I^c_\rho(X^-) =2\big(I^c_\rho(A\otimes {\mathbf
1_2})+I^c_\rho({\mathbf 1_1}\otimes B)\big),
\]
where in addition $ X^-=A\otimes\mathbf 1_2-\mathbf 1_1\otimes B, $ and the inequality in Lemma~\ref{lieb lemma}.
\end{proof}

\section{Semi-quantum states}

A local von Neumann measurement $ P $ of the first party of a bipartite state $ \rho $ on a tensor product $ H_1\otimes H_2 $ is given by
\[
P(\rho)=\sum_{i\in I}(P_i\otimes \mathbf 1_2)\rho (P_i\otimes \mathbf 1_2)
\]
where $ \{P_i\}_{i\in I} $ is a resolution of the identity on $ H_1 $ in one-dimensional projections. We may similarly define a local von Neumann measurement of the second party.
The concept of semi-quantum states was introduced by Luo and Zhang \cite{LuoZhang08}.

\begin{definition}
 A bipartite state $\rho$ is called a semi-quantum state if there exists a local von Neumann measurement $P=\{P_i\}_{i\in I}$ of the first (or second) party leaving $\rho$ invariant, i.e. $ P(\rho)=\rho. $
\end{definition}

Luo and Zhang \cite{LuoZhang08} showed that $\rho$ is a semi-quantum state
(corresponding to the local von Neumann measurement $P) $ if and only if
\begin{equation}\label{representation of semi-quantum state}
\rho=\sum_{i\in I} p_i P_i\otimes\rho_i
\end{equation}
where $ (p_i)_{i\in I} $ is a probability distribution and $ \rho_i $ for each
$ i\in I $ is a state of the second party. They also proved that the Wigner-Yanase-Dyson skew information is superadditive when the bipartite state is semi-quantum. We extend this result to general metric adjusted skew informations.

\begin{theorem}
Let $f$ be a regular function in $ \mathcal{F}_{op}$ and let $ c $ be the associated Morozova-Chentsov function. Let furthermore $\rho$ be a semi-quantum state on a tensor product $ H_1\otimes H_2 $ of two parties,
then we obtain superadditivity of the metric adjusted skew information
\[
I^c_\rho(A \otimes\mathbf 1_2+ \mathbf 1_1\otimes B) \geq I^c_{\rho_1}(A)+ I^c_{\rho_2}(B),
\]
where $ A $ is an observable of the first party and $ B $ is an observable of the second party.

\end{theorem}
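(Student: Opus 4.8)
The plan is to exploit the block structure coming from the local von Neumann measurement $ P=\{P_i\}_{i\in I} $ that leaves $ \rho $ invariant, together with a Pythagorean splitting of the metric adjusted skew information.

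Put $ Q_i=P_i\otimes\mathbf 1_2, $ so that $ \{Q_i\}_{i\in I} $ is a resolution of the identity on $ H_1\otimes H_2. $ The condition $ P(\rho)=\rho $ is equivalent to $ \rho=\sum_{i\in I} Q_i\rho Q_i, $ that is, $ \rho $ commutes with each $ Q_i; $ and by the Luo--Zhang representation (\ref{representation of semi-quantum state}) we have $ \rho_1=\sum_i p_i P_i $ and $ \rho_2=\sum_i p_i\rho_i $ (using $ \operatorname{Tr}P_i=1 $ and $ \operatorname{Tr}\rho_i=1 $). Decompose the first observable as $ A=A_d+A_{od} $ with
\[
A_d=\sum_{i\in I} P_iAP_i,\qquad A_{od}=\sum_{i\ne j} P_iAP_j\,;
\]
both summands are self-adjoint, and $ A_d $ commutes with every $ P_i $ (hence with $ \rho_1 $). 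Accordingly $ X:=A\otimes\mathbf 1_2+\mathbf 1_1\otimes B $ splits as $ X=X_d+X_{od}, $ where $ X_d=A_d\otimes\mathbf 1_2+\mathbf 1_1\otimes B $ commutes with every $ Q_i $ (it is block diagonal, $ Q_iX_dQ_j=0 $ for $ i\ne j $) while $ X_{od}=A_{od}\otimes\mathbf 1_2 $ is purely off-diagonal, $ Q_iX_{od}Q_i=0 $ for each $ i. $

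The central step is the identity
\[
I^c_\rho(X)=I^c_\rho(X_d)+I^c_\rho(X_{od}).
\]
Since $ \rho $ commutes with all $ Q_i, $ the Hilbert--Schmidt space of operators decomposes along the grading $ \bigoplus_{i,j} Q_i\mathcal B(H_1\otimes H_2)Q_j $ into subspaces invariant for the commuting pair $ L_\rho,R_\rho $ — on the $ (i,j) $ summand $ L_\rho $ acts as left multiplication by the compression $ Q_i\rho Q_i $ and $ R_\rho $ as right multiplication by $ Q_j\rho Q_j $ — so that $ c(L_\rho,R_\rho) $ preserves this grading. Now $ i[\rho,X_d] $ lies in the diagonal part $ \bigoplus_i Q_i\mathcal B(H_1\otimes H_2)Q_i $ while $ i[\rho,X_{od}] $ lies in the off-diagonal part $ \bigoplus_{i\ne j} Q_i\mathcal B(H_1\otimes H_2)Q_j, $ each of which is carried into itself by $ c(L_\rho,R_\rho); $ since an off-diagonal operator has vanishing trace, the mixed terms $ K^c_\rho(i[\rho,X_d],i[\rho,X_{od}]) $ and $ K^c_\rho(i[\rho,X_{od}],i[\rho,X_d]) $ are both zero, and expanding $ K^c_\rho(i[\rho,X],i[\rho,X]) $ gives the identity.

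It then remains to estimate the two terms. By Lemma~\ref{lieb lemma} applied to the self-adjoint observable $ A_{od} $ we get $ I^c_\rho(X_{od})=I^c_\rho(A_{od}\otimes\mathbf 1_2)\ge I^c_{\rho_1}(A_{od}), $ and since $ A_d $ commutes with $ \rho_1 $ we have $ i[\rho_1,A_{od}]=i[\rho_1,A], $ hence $ I^c_{\rho_1}(A_{od})=I^c_{\rho_1}(A). $ On the other hand $ A_d $ commutes with every $ P_i, $ so $ [\rho,A_d\otimes\mathbf 1_2]=0 $ and $ i[\rho,X_d]=i[\rho,\mathbf 1_1\otimes B], $ whence $ I^c_\rho(X_d)=I^c_\rho(\mathbf 1_1\otimes B)\ge I^c_{\rho_2}(B) $ by the analogue of Lemma~\ref{lieb lemma} for the second party. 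Adding the two estimates yields $ I^c_\rho(X)\ge I^c_{\rho_1}(A)+I^c_{\rho_2}(B), $ as claimed. I expect the main obstacle to be the Pythagorean identity: everything hinges on verifying carefully that the joint functional calculus $ c(L_\rho,R_\rho) $ respects the $ \{Q_i\} $-grading whenever $ \rho $ is block diagonal, so that the cross terms genuinely drop out.
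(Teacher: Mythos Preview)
Your proof is correct but follows a genuinely different route from the paper. The paper splits $X$ as $(A\otimes\mathbf 1_2)+(\mathbf 1_1\otimes B)$ and shows directly, by an explicit spectral computation using the full eigendecomposition $\rho=\sum_{i,j}p_i\lambda_{ij}\,(P_i\otimes Q_{ij})$ together with the identity $\hat c(x,x)=0$, that the cross term $K^c_\rho(i[\rho,A\otimes\mathbf 1_2],\,i[\rho,\mathbf 1_1\otimes B])$ vanishes; this yields $I^c_\rho(X)=I^c_\rho(A\otimes\mathbf 1_2)+I^c_\rho(\mathbf 1_1\otimes B)$, after which Lemma~\ref{lieb lemma} is applied once to each summand. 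You instead split along the $\{Q_i\}$-grading induced by the local measurement, writing $X=X_d+X_{od}$ with $X_d$ block-diagonal and $X_{od}$ purely off-diagonal, and obtain a Pythagorean identity from the observation that $c(L_\rho,R_\rho)$ preserves this grading whenever $\rho$ is block-diagonal. Your argument is more structural and never needs to diagonalize the $\rho_i$; it only uses that $\rho$ commutes with the $Q_i$, and it isolates a general orthogonality principle that could be reused elsewhere. The paper's computation is shorter and more concrete, and makes the mechanism $\hat c(x,x)=0$ very visible. Both approaches ultimately appeal to Lemma~\ref{lieb lemma} (and its second-party analogue) for the final inequalities.
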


\begin{proof}
The essential step is to prove that
\begin{equation}\label{vanishing cross terms}
K_{\rho}^c(i[\rho, A\otimes\mathbf 1_2], i[\rho, \mathbf 1_1\otimes B])=0
\end{equation}
for a semi-quantum state $ \rho. $ We first notice that
\[
K_{\rho}^c(i[\rho , A\otimes\mathbf 1_2], i[\rho, \mathbf 1_1\otimes B])=
\tr (A\otimes\mathbf 1_2) \hat c(L_\rho,R_\rho) (\mathbf 1_1\otimes B)
\]
where $ \hat c $ is defined in (\ref{c hat}). We write $ \rho $ on the form
(\ref{representation of semi-quantum state}) and choose for each $ i\in I $
a spectral resolution
\[
\rho_i=\sum_{j\in J} \lambda_{ij} Q_{ij}
\]
in terms of one-dimensional projections and obtain in this way a spectral resolution
\[
\rho=\sum_{i\in I;\,j\in J}p_i\lambda_{ij} (P_i\otimes Q_{ij})
\]
of the semi-quantum state $ \rho. $ We may then calculate
\[
\begin{array}{l}
\hat c(L_\rho,R_\rho)(\mathbf 1_1\otimes B)\\[3ex]
=\displaystyle\sum_{i,i'\in I;\,j,j'\in J} \hat c(p_i\lambda_{ij}, p_{i'}\lambda_{i'j'})
(P_i\otimes Q_{ij})(\mathbf 1_1\otimes B)(P_{i'}\otimes Q_{i'j'})\\[4ex]
=\displaystyle\sum_{i\in I;\,j,j'\in J} \hat c(p_i\lambda_{ij}, p_i\lambda_{ij'})
(P_i\otimes Q_{ij} B Q_{ij'})
\end{array}
\]
and obtain
\[
\tr (A\otimes\mathbf 1_2) \hat c(L_\rho,R_\rho) (\mathbf 1_1\otimes B)
=\sum_{i\in I;\,j\in J} \hat c(p_i\lambda_{ij}, p_i\lambda_{ij})
(\tr AP_i)(\tr BQ_{ij}).
\]
Since $ \hat c(x,x)=0 $ for $ x>0 $ we derive (\ref{vanishing cross terms}) as desired. The metric $ K^c_\rho $ is sesqui-linear, hence
\[
\begin{array}{l}
I^c_\rho(A \otimes\mathbf 1_2+ \mathbf 1_1\otimes B)\\[2ex]
=I^c_\rho(A \otimes\mathbf 1_2)+I^c_\rho(\mathbf 1_1\otimes B)+
m(c) K_{\rho}^c(i[\rho,A \otimes\mathbf 1_2], i[\rho, \mathbf 1_1\otimes
B])\\[2ex]
=I^c_\rho(A \otimes\mathbf 1_2)+I^c_\rho(\mathbf 1_1\otimes B)\\[2ex]
\ge I^c_{\rho_1}(A)+ I^c_{\rho_2}(B),
\end{array}
\]
where we used (\ref{vanishing cross terms}) and Lemma~\ref{lieb lemma}.
\end{proof}

\section{Unbounded skew information}

The definition in (\ref{metric adjusted skew information}) of the metric adjusted skew information requires the associated monotone metric to be regular. The monotone metric may thus be extended radially from the state manifold to the state space, and the metric adjusted skew information is then normalized such that it coincides with the variance on pure states. We may also introduce a metric adjusted skew information associated with a non-regular monotone metric by setting
\begin{equation}\label{unbounded metric adjusted skew information}
I^c_\rho(A) = K^c_\rho(i[\rho, A], i[\rho, A])
\end{equation}
if $ c $ is a non-regular Morozova-Chentsov function. This type of metric adjusted skew information is unbounded and can no longer be extended from the state manifold to the state space. However, it enjoys all the same general properties as a bounded metric adjusted skew information; it is non-negative and the convexity in the state variable follows directly from Theorem \ref{theorem: operator convex functions}.
Similarly, all the statements concerning limited forms of superadditivity appearing in this paper may be extended ad verbatim to unbounded metric adjusted skew informations.

\begin{theorem}
The functions $ f_p(t) $ defined in (\ref{family of operator monotone functions}) are for $ 1<p\le 2 $  non-regular functions in $ \mathcal{F}_{op}.$
\end{theorem}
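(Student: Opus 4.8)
The plan is to verify the three defining properties of $\mathcal{F}_{op}$ together with non-regularity, the only substantial point being operator monotonicity. Condition (iii), $f_p(1)=1$, is part of the definition (and is consistent with the formula, since $t^p-1\sim p(t-1)$ and $t^{1-p}-1\sim(1-p)(t-1)$ as $t\to1$). Condition (ii) is the same computation as in the range $0<p<1$: replacing $t$ by $t^{-1}$ turns $(t^{-1}-1)^2$ into $(t-1)^2/t^{2}$ and $(t^{-p}-1)(t^{-(1-p)}-1)$ into $(t^{p}-1)(t^{1-p}-1)/t$, so the quotient acquires exactly one power of $t$, giving $f_p(t)=tf_p(t^{-1})$. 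For positivity on $\mathbf R_+$ one counts signs: for $1<p\le2$ one has $p(1-p)<0$, while $t^{p}-1$ and $t^{1-p}-1$ have opposite sign on each side of $t=1$ (because $p>0$ and $1-p<0$), so their product is negative and $f_p>0$ throughout. Finally, as $t\to0^{+}$ the factor $t^{1-p}\to\infty$ dominates the denominator and $f_p(t)\sim p(p-1)t^{p-1}\to0$; hence, once $f_p$ is known to lie in $\mathcal{F}_{op}$, it is automatically non-regular, with $m(f_p)=f_p(0)=0$.

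For operator monotonicity I would first rewrite $f_p$ without negative exponents. Putting $r:=p-1\in(0,1]$ and using $1-p=-r$, clearing the factor $t^{-r}$ gives
\[
f_p(t)=\frac{r(r+1)\,t^{r}(t-1)^{2}}{(t^{r+1}-1)(t^{r}-1)}.
\]
(At the endpoint this collapses to $f_2(t)=2t/(t+1)$, the harmonic-mean function, a useful sanity check: manifestly operator monotone and non-regular.) The natural strategy is to show that $f_p$ extends to a Pick function — analytic on $\mathbf C\setminus(-\infty,0]$, real on $\mathbf R_+$, and carrying the open upper half-plane into its closure — which by L\"owner's theorem is equivalent to operator monotonicity on $\mathbf R_+$. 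The hypothesis $p\le2$ enters decisively in the analyticity step: on the cut plane $t^{r}-1$ has only the simple zero $t=1$, and $t^{r+1}-1$ likewise has only the simple zero $t=1$ \emph{precisely because} $r+1\le2$ (for $r+1>2$ the equation $t^{r+1}=1$ would have further solutions in $\mathbf C\setminus(-\infty,0]$, which $(t-1)^2$ would not cancel); so the denominator has a double zero at $t=1$ that is cancelled by $(t-1)^{2}$, and $f_p$ is holomorphic on $\mathbf C\setminus(-\infty,0]$ with a removable singularity of value $1$ at $t=1$.

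The hard part — and the main obstacle — is to check that $f_p$ sends the upper half-plane into the closed upper half-plane. I would attempt this directly from the displayed product representation by tracking arguments: for $\operatorname{Im} t>0$ one has $\arg t^{r}=r\arg t\in(0,\pi)$, $\arg(t-1)\in(0,\pi)$, and $\arg(t^{r}-1)\in(0,\pi)$, so it remains to control $\arg(t^{r+1}-1)$, which requires a short case analysis according to whether $(r+1)\arg t$ lies below or above $\pi$. If that bookkeeping proves unpleasant, I would fall back on a structural fact: the Wigner--Yanase--Dyson kernel $c_p(x,y)=\frac{(x^{p}-y^{p})(x^{1-p}-y^{1-p})}{p(1-p)(x-y)^{2}}$ defines a monotone metric for $1<p\le2$, equivalently $f_p\in\mathcal{F}_{op}$ — obtainable either by substituting $t^{s}=\tfrac{\sin\pi s}{\pi}\int_{0}^{\infty}\frac{t}{t+\lambda}\lambda^{s-1}\,d\lambda$ ($0<s<1$) and reducing to the positivity of an explicit kernel, or from the convexity results of Jen\v{c}ova and Ruskai, or from the correspondence between regular and non-regular operator monotone functions of Gibilisco, Hansen and Isola. (Note that operator convexity of $h_p(t)=(t-1)^{2}/f_p(t)=(t^{r+1}+t^{-r}-t-1)/(r(r+1))$, which is immediate as it is a sum of operator convex power functions, is by itself \emph{not} enough, as already remarked after Theorem~\ref{theorem: operator convex functions}.) Once operator monotonicity is secured, conditions (i)--(iii) hold and $f_p(0)=0$, so $f_p$ is a non-regular function in $\mathcal{F}_{op}$, as claimed.
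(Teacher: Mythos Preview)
Your handling of positivity, the symmetry $f_p(t)=tf_p(t^{-1})$, the normalization $f_p(1)=1$, and the non-regularity $f_p(0)=0$ is fine. The gap is exactly where you say ``the hard part'' lies: you do not actually prove operator monotonicity. The argument-tracking sketch is left at ``a short case analysis,'' and for $t$ in the upper half-plane with $(r+1)\arg t$ crossing $\pi$ the sign of $\operatorname{Im}(t^{r+1}-1)$ is genuinely uncontrolled, so the bookkeeping is not short and you have not shown it closes. More seriously, none of your three fallbacks delivers the result: the integral representation $t^{s}=\tfrac{\sin\pi s}{\pi}\int_0^\infty\frac{t}{t+\lambda}\lambda^{s-1}d\lambda$ is valid only for $0<s<1$ and the Hasegawa--Petz argument based on it does \emph{not} extend to $1<p\le2$ (the paper says this explicitly); Jen\v{c}ova--Ruskai give convexity of the skew information, i.e.\ operator convexity of $h_p=(t-1)^2/f_p$, which you yourself correctly note is insufficient; and the Gibilisco--Hansen--Isola correspondence maps $\mathcal F_{op}$ to itself, so it presupposes what you want to prove. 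In short, the only nontrivial claim in the theorem is left unproved.

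The paper's route avoids complex analysis entirely. Write
\[
f_p(t)=p(p-1)\,\frac{t-1}{g_p(t)-1},\qquad g_p(t)=\frac{t^{p}-1}{t-1}+\frac{t^{1-p}-1}{t-1}.
\]
For $1<p\le2$ both $t\mapsto t^{p}$ and $t\mapsto t^{1-p}$ are operator convex, so by Bendat--Sherman each divided difference is operator monotone; hence $g_p$ is operator monotone and therefore operator concave. Applying Bendat--Sherman once more (to concavity), $(g_p(t)-1)/(t-1)$ is operator monotone decreasing; it is also strictly positive on $(0,\infty)$ since $t^{p}+t^{1-p}-t-1\ge0$ with equality only at $t=1$. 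Inversion of a positive operator-decreasing function is operator monotone, and multiplying by the positive constant $p(p-1)$ gives $f_p$ operator monotone. This two-line reduction to operator convexity of power functions is the missing idea.
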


\begin{proof}
The function $ f_p $ is for $ 1< p\le 2 $ positive since $ p(1-p)<0 $ and exactly one of the terms in the denominator is negative for each $ t\ne 1. $ We notice that $ f_p(t)\to 1 $ for $ t\to 1 $ and that $ f_p(t)\to 0 $ for $ t\to 0. $ We only need to prove that $ f_p $ is operator monotone, and to obtain this we consider the identity
\[
f_p(t)=-p(1-p)\frac{t-1}{g_p(t)-1}
\]
where the function
\[
g_p(t)=\left\{
\begin{array}{ll}
\displaystyle \frac{t^p-1}{t-1}+\frac{t^{1-p}-1}{t-1}\qquad &t>0,\, t\ne 1\\[2ex]
1 &t=1.
\end{array}
\right.
\]
Since both $ t^p $ and $ t^{1-p} $ are operator convex for $ 1<p\le 2, $ it follows from \cite[Theorem 3.2]{kn:bendat:1955} that $ g_p $ is operator monotone, and it is therefore also operator concave (notice that this conclusion does not require $ g_p $ to be positive). By appealing to Bendat and Sherman's theorem once more and taking inverse we conclude that $ f_p $ is operator monotone.
\end{proof}

The monotone metrics corresponding to the functions $ f_p $ for $ 1<p< 2 $ have not been studied in the literature\footnote{There is a claim in \cite{kn:hasegawa:1997} that the functions $ f_p $ are operator monotone also for $ 1<p<2. $ However, it does not seem to be widely noticed that there is an unrecoverable error in the proof. In equation $ (3.3') $ in the reference there is an integral representation of functions written on the form $ (cx+d)/(ax+b)^2, $ and the key argument is that these functions should be operator monotone decreasing for positive coefficients; or equivalently that the functions $ (ax+b)^2/(cx+d) $ should be operator monotone. To this effect there is a reference to a paper \cite{HansenPedersen82} co-authored by the second order. However, there is no such claim in \cite{HansenPedersen82} and the assertion is not true.
If it were true, and since a point-wise limit of operator monotone
functions is again operator monotone, the claim would also be true for
non-negative coefficients. Then, taking $ a=1, $ $ b=0, $ $ c=0, $ and $ d=1 $ we would
obtain that $ x^2 $ is operator monotone; but that is not the case.}. Hasegawa and Petz \cite[Theorem 2]{kn:hasegawa:1996} proved operator monotonicity for $ 0<p<1 $ by ingeniously constructing integral representations. However, this type of representations only work for $ 0<p<1. $ By symmetry we also obtain that $ f_p $ is operator monotone for $ -1<p<0. $ We also notice that
\[
f_p(t)\to \frac{t-1}{\log t}\qquad\text{for}\quad p\to 0\quad\text{or}\quad p\to 1,
\]
and this is the function generating the Kubo metric. Similarly,
\[
f_p(t)=\frac{2t}{t+1}\qquad\text{for}\quad p=-1\quad\text{or}\quad p=2,
\]
and this is the function generating the minimal monotone metric. Therefore, with these extensions, we obtain monotone metrics for $ -1\le p\le 2. $

For $ 1<p\le 2 $ the (unbounded) metric adjusted skew information associated with the non-regular functions
$ f_p\in\mathcal{F}_{op} $ is given by
\begin{equation}\label{extension of WYD}
I_\rho^c(A)=\frac{-1}{p(1-p)}\tr [\rho^p,A]\cdot[\rho^{1-p},A].
\end{equation}
In fact, the only difference from (\ref{WYD skew information}) is the constant in front of the trace. But this is exactly the extension of the Wigner-Yanase-Dyson skew information to parameter values $ 1<p\le 2 $ studied by Jenc\v{o}va and Ruskai in \cite{kn:jencova:2009}. The above result shows that
this extension of the WYD-information is also associated with monotone metrics and can be understood in terms of the notion of metric adjusted skew information. The main difference is that the metric is regular for
$ 0<p<1 $ but non-regular for $ 1<p\le 2. $

It is therefore immediate that the extension proposed by Jenc\v{o}va and Ruskai is non-negative and convex in the state variable. Furthermore, it satisfies all the restricted forms of monotonicity under partial traces studied in this paper.

{\small

      \bibliographystyle{plain}
    \bibliography{mathharv}

\vfill

      \noindent Liang Cai: Department of Mathematics, Beijing Institute of Technology, Beijing, China.

      \noindent Frank Hansen: Department of Economics, University
       of Copenhagen,     
       Øster Farimagsgade 5, building 26, DK-1353 Copenhagen K, Denmark.}

\end{document}